\newtheorem{lemma}{Lemma}
\newtheorem{example}{Example}
\newtheorem{simplificationrule}{Rule}
\def\1{\mathds{1}}
\newcommand{\e}{\mathrm{e}}
\begin{document}
	\title{Hybrid simplification rules for boundaries of quantum circuits}
	\author{Michael Epping}
	\orcid{0000-0003-0950-6801}
	\email{Michael.Epping@dlr.de}
	\affiliation{German Aerospace Center (DLR), Linder Höhe, 51147 Cologne, Germany}
	
	\begin{abstract}
		We describe rules to simplify quantum circuits at their boundaries, i.e. at state preparation and measurement. 
		There, any strictly incoherent operation may be pushed into a pre- or post-processing of classical data.
		The rules can greatly simplify the implementation of quantum circuits and are particularly useful for hybrid algorithms on noisy intermediate-scale quantum hardware,
		e.g. in the context of quantum simulation.
		Finally we illustrate how circuit cutting can enable the rules to be applied to more locations.
	\end{abstract}
	\maketitle
	
% Introduction
%% NISQ requires circuit optimization
Near-term quantum computers do not have the capabilities for full-fledged error correction. 
Thus they suffer from noise and can only run circuits of limited size. 
Nevertheless, useful computations might be performed on such devices. 
In this so called noisy intermediate-scale quantum (NISQ) regime, 
it is crucial to optimize the quantum circuits to get meaningful results. 
The literature on this important building-block of a quantum software stack is steadily increasing.
It includes work on compilation techniques~\cite{Lomont2003,Maslov2008,Iten2019,Khatri2019}, which can also be quite hardware specific~\cite{Martinez2016}.
There are also several approaches to optimizing quantum circuits, e.g. w.r.t. the circuit depth or the controlled-not or T-gate count~\cite{Heyfron2018}.
For short circuits on few qubits optimal solutions can be found by meet-in-the-middle algorithms~\cite{Amy2013}.
However, larger circuits can only be optimized using heuristics.
Circuits can be simplified via a detour over ZX diagrams~\cite{Coecke2011,Kissinger2020,Backens2021}.
But often rewrite rules are used to simplify the circuit step by step~\cite{Nam2018,Pointing2021}.

%% Hybrid approach and VQA
The present work can be viewed to follow the latter approach as well.
The focus is slightly different, though, because we are interested in hybrid simplification rules.
With ``hybrid'' we mean rules which affect both the quantum and the classical part of a computation.
For NISQ applications it is quite natural to think about hybrid simplification rules, 
because the most promising applications for NISQ devices are hybrid anyways.
That is they use the quantum computer only for those parts of the algorithm, 
where it is actually beneficial.
An important class of such algorithms are variational quantum algorithms~\cite{Cerezo2021}. 

%% Incoherent operations
Before we collect hybrid simplification rules, 
we develop some intuition about what kind of operations we want to shift from the quantum to the classical computer. 
For each quantum computing hardware there is a preferred basis of the Hilbert space, the computational basis, $\ket{0}$, $\ket{1}$, ..., and $\ket{2^n-1}$.
States which are superpositions of such basis states are said to contain coherence~\cite{Baumgratz2014, Streltsov2017}.
And operations that map incoherent states to incoherent states are called incoherent operations.
These are considered free operations in the resource theory of coherence.
The possibility of superpositions is maybe the most crucial difference between classical and quantum computation.
In this sense the computational states are the classical states of the quantum computer.
They can also be stored and manipulated on a classical computer.
Given the noise on a near-term intermediate-scale quantum (NISQ) computer, operations that can be performed (efficiently) on a classical computer can indeed be considered free to some extent. 
It is quite fitting that the free operations in the resource theory of coherence are operations that can be performed on a classical computer, 
such that they are also ``free'' in the NISQ setting.
This leads to the idea that incoherent operations should be performed on a classical computer whenever possible.

In this paper we discuss circuit simplification rules that can be used to achieve exactly that.
We start by introducing some notation in Sectin~\ref{sec:notation}.
Afterwards we first look at unitary operations in Section~\ref{sec:unitary} and move on to general quantum operations in Section~\ref{sec:nonunitary}. Finally, in Section~\ref{sec:cutting} we discuss how circuit cutting can increase the usefulness of our approach and we conclude the paper in Section~\ref{sec:conclusion}.

\section{Notation}\label{sec:notation}
Without loss of generality, we will assume that a quantum algorithm starts with an incoherent state and ends with a measurement in the computational basis.
Thus all measurements will be in the computational basis unless explicitly stated otherwise.
The computational basis state
\begin{equation}
	\ket{x} = \bigotimes_{k=1}^{n} \ket{x_k}
\end{equation}
can be conveniently interpreted as a product state of single qubit states $\ket{x_k}$ where $x_k$ is the $k$-th binary digit of $x$.
We use commata to separate the binary digits of a number, e.g. $x=x_1,x_2,x_3$. 
Note that we reverse the order of the digits to match the usual endianness of qubit registers. 

% Circuit diagram notation
In our circuit diagrams single, double, and triple wires denote qubits, classical information, and an arbitrary number of qubits, respectively. 
We use $\1$ to denote the identity matrix of appropriate dimension.
$X$, $Y$, and $Z$, denotes the Pauli matrices and $r_x$, $r_y$ and $r_z$ denotes the corresponding rotation operator gates.
A general controlled gate is denoted by
\begin{equation}
	cU_i := \sum_i \proj{i}\otimes U_i, \label{eq:cUi}
\end{equation}
where the $U_i$ are unitary.
More gates are listed in Table~\ref{tab:ppgates}.

\section{Unitary evolution}\label{sec:unitary}
In general incoherent operations are not unitary. 
However, we postpone the non-unitary case to Section~\ref{sec:nonunitary} and focus on the unitary case for now.
A unitary incoherent operation $U$ by definition maps any canonical basis state to a canonical basis state.
Via a natural generalization of the usual definitions, we express such unitary incoherent operations in the phase polynomial representation~\cite{Amy2013, Nam2018}.
Then we can write
\begin{equation}
	U \ket{x} = \e^{i p(x)} \ket{f(x)} \label{eq:phasepolynomialrepresentation}
\end{equation}
for all $ x\in\{0,1,...,2^n-1\}$, where we call $p$ the phase polynomial and $f$ the basis transformation.
In short we call $(p, f)$ the phase polynomial representation of $U$. 
We should note that we do not require that $p$ is actually a polynomial in the variable $x$, though.
In the present context $p(x)$ could as well be an arbitrary real function of $x$.
Examples of gates are given in Table~\ref{tab:ppgates}.
\begin{table*}
\caption{Examples of gates with phase polynomial representations.}
\label{tab:ppgates}
\centering
\begin{tabular}{c|c|c|c}
Name & Definition & Phase polynomial & Basis transformation\\
\hline
$X$ & $\ket{1}\bra{0}+\ket{0}\bra{1}$ & $p(x) = 0$ & $f(x) = x \oplus 1$ \\
$Y$ & $i \ket{1}\bra{0}-i\ket{0}\bra{1}$ & $p(x) = \pi x$ & $f(x) = x \oplus 1$ \\
$Z$ & $\proj{0}-\proj{1}$ & $p(x) = \pi x$ & $f(x) = x$\\
$T$ & $\proj{0} + \e^{i \pi/4} \proj{1}$ & $p(x) = \pi x /4$ & $f(x)=x$\\
$r_z$ & $\e^{i \theta/2}\proj{0}+\e^{-i\theta/2}\proj{1}$ & $p(x)=\theta x$ & $f(x)=x$ \\
$c_X$ & $\proj{0}\otimes \1 + \proj{1}\otimes X$ & $p(x)=0$ & $f(x_1,x_2) = x_1,x_1\oplus x_2$\\
Toffoli & $\1+\proj{1}^{\otimes 2}\otimes (X-\1)$ & $p(x)=0$ & $f(x_1,x_2,x_3) = x_1,x_2,x_3\oplus x_1x_2$
\end{tabular}
\end{table*}
The composition of gates into circuits is made explicit in the following lemma.
\begin{lemma}[Phase polynomial composition]
The composition of two circuits $U_1$ and $U_2$ with phase polynomial representations $(p_1, f_1)$ and $(p_2, f_2)$, respectively, has the phase polynomial representation $(p_1 + p_2\circ f_1, f_2 \circ f_1)$.	
\end{lemma}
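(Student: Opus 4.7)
The plan is to simply evaluate the composed unitary $U_2 U_1$ on an arbitrary computational basis state $\ket{x}$ and read off its phase polynomial representation from the definition in Eq.~\eqref{eq:phasepolynomialrepresentation}. Since the phase polynomial representation is characterized by the action on basis states, this is sufficient.

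First I would apply $U_1$ to $\ket{x}$, which by assumption gives
\begin{equation}
U_1 \ket{x} = \e^{i p_1(x)} \ket{f_1(x)}.
\end{equation}
Then I would apply $U_2$ to the right-hand side. Using linearity to pull out the global phase $\e^{i p_1(x)}$ and then invoking the phase polynomial representation of $U_2$ evaluated at the basis state $\ket{f_1(x)}$, I get
\begin{equation}
U_2 U_1 \ket{x} = \e^{i p_1(x)} \e^{i p_2(f_1(x))} \ket{f_2(f_1(x))}.
\end{equation}

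Collecting the two exponentials into $\e^{i(p_1(x) + p_2(f_1(x)))}$ and recognizing $f_2(f_1(x)) = (f_2 \circ f_1)(x)$ yields exactly the phase polynomial representation $(p_1 + p_2 \circ f_1,\, f_2 \circ f_1)$ claimed in the lemma. Since $x$ was arbitrary, the identification holds for all basis states, completing the proof.

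There is no genuine obstacle here; the only subtlety worth flagging explicitly is that the argument of $p_2$ must be $f_1(x)$ rather than $x$, because $U_2$ acts on the state produced by $U_1$ and the phase contributed by $U_2$ is determined by that intermediate basis label. This is also the reason the composition is written as $p_2 \circ f_1$ rather than $p_2$ alone, and it mirrors the chain-rule-like structure one sees when composing affine or polynomial maps.
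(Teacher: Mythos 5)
Your proof is correct and follows exactly the same route as the paper's: apply $U_1$ to $\ket{x}$, pull out the phase $\e^{i p_1(x)}$ by linearity, and apply the representation of $U_2$ at the intermediate basis state $\ket{f_1(x)}$. The paper's proof is just the same two-line computation written without commentary, so there is nothing to add or change.
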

\begin{proof}
	\begin{alignat*}{2}
		U_2 U_1 \ket{x} =& U_2 \e^{i p_1(x)} \ket{f_1(x)}\\
		=& \e^{i (p_1(x) + p_2(f_1(x)))} \ket{f_2(f_1(x))}
	\end{alignat*}
\end{proof}
This implies that a circuit which consists only of gates with a phase polynomial representation  can itself be expressed in this form.
A simple algorithm to convert a circuit given as a list of gates into its phase polynomial representation follows directly from this observation:
Start with the identity operation with phase polynomial representation $(0,x)$. Then update the basis transformation and the phase polynomial gate by gate.

The following observation allows to simplify circuits with phase polynomials on their boundary.
\begin{simplificationrule}[Phase polynomial trimming]\label{rule:pptrimming}
Let $F$ be a quantum circuit with phase polynomial representation $(p, f)$, i.e. $F\ket{x}=\e^{i p(x)} \ket{f(x)}$. Then $F$ can be replaced by classical pre- and post-processing via 
\begin{alignat}{2}
	\raisebox{4pt}{\begin{tikzcd}[column sep=4mm]
			\lstick{\ket{x}}& \gate{F}\qwbundle[alternate]{} & \qwbundle[alternate]{}
	\end{tikzcd}}=&
	\raisebox{4pt}{\begin{tikzcd}[column sep=4mm]
			\lstick{\ket{f(x)}}&\qwbundle[alternate]{}
	\end{tikzcd}}\label{eq:trimppleft}
	\intertext{and}
	\raisebox{4pt}{\begin{tikzcd}[column sep=4mm]
			\qwbundle[alternate]{}&\gate{F}\qwbundle[alternate]{} & \meter{}\qwbundle[alternate]{}
	\end{tikzcd}}=&
	\raisebox{4pt}{\begin{tikzcd}[column sep=4mm]
			\qwbundle[alternate]{}& \meter{}\qwbundle[alternate]{} &\gate[cwires={1}]{f}
	\end{tikzcd}}, \label{eq:trimppright}
\end{alignat}
respectively. 
\end{simplificationrule}
The practical relevance of Rule~\ref{rule:pptrimming} is that any circuit which starts or ends with a part that allows a phase polynomial representation can be simplified by classical pre- or post-processing.
The following example illustrates this application.
\begin{example}
	The rule in Eq.~(\ref{eq:trimppright}) can be applied as in
	\begin{equation*}
	\begin{tikzcd}[column sep=2mm]
		\qw & \gate{H}\slice{} & \ctrl{1} & \ctrl{2} & \ctrl{3} & \meter{}\\
		\qw & \qw & \targ{} & \qw & \qw & \meter{}\\
		\qw & \qw & \gate{Z} & \targ{} & \qw & \meter{}\\
		\qw & \qw & \qw & \qw & \targ{} & \meter{}
	\end{tikzcd}
=
\begin{tikzcd}[column sep=2mm]
	\qw & \gate{H} & \meter{} & \gate[4,cwires={1,2,3,4}]{f}\\
	\qw & \qw &  \meter{}&\\
	\qw & \qw &  \meter{}&\\
	\qw & \qw &  \meter{}&
\end{tikzcd},
	\end{equation*}
where $$f(x_1,x_2,x_3,x_4) = x_1,x_2\oplus x_1, x_3 \oplus x_1, x_4 \oplus x_1.$$
\end{example}
Sometimes rewriting, or re-compiling parts of the circuit, is necessary before one can trim a phase polynomial.
A typical situation is discussed in the following example.
\begin{example}
	Any single qubit unitary can be decomposed into the gate sequence $r_z r_x r_z$. 
	Note that $r_z$ has a phase polynomial representation, see Table~\ref{tab:ppgates}.
	Clever use of this decomposition allows to simplify a circuit as in
	\begin{equation}
		\begin{aligned}
		&\begin{tikzcd}[column sep=2mm]
			\qwbundle[alternate]{}&\qwbundle[alternate]{}& \gate{U}\qwbundle[alternate]{} & \qwbundle[alternate]{}& \qwbundle[alternate]{}& \qwbundle[alternate]{}\\
			\lstick{$\ket{0}$}&\gate{V} & \ctrl{-1} & \gate{W} & \meter{}
		\end{tikzcd}\\
	=& 
	\begin{tikzcd}[column sep=2mm]
		\qwbundle[alternate]{}&\qwbundle[alternate]{}&\qwbundle[alternate]{}& \gate{U}\qwbundle[alternate]{} & \qwbundle[alternate]{}& \qwbundle[alternate]{}& \qwbundle[alternate]{}\\
		\lstick{$\ket{0}$}&\gate{r_x} & \gate{r_z} & \ctrl{-1} & \gate{r_x} & \meter{}
	\end{tikzcd},
	\end{aligned}
	\end{equation}
where we also used that $r_z$ commutes with the controlled-$U$ gate on the control qubit.
\end{example}

%Here $(-p, f^{-1})$ is the phase polynomial representation of $F^\dagger$.

Further possible simplifications are contained in the following rule.
\begin{simplificationrule}\label{rule:rewrite}
	Given a unitary $U$, one can choose a unitary $V$ such that $F:=V^{\dagger} U$ has phase polynomial representation $(p, f)$. Then
	\begin{alignat}{2}
		\raisebox{4pt}{\begin{tikzcd}[column sep=2mm]
			\qwbundle[alternate]{}&\gate{U}\qwbundle[alternate]{} & \meter{}\qwbundle[alternate]{}
		\end{tikzcd}} =& \raisebox{4pt}{\begin{tikzcd}[column sep=2mm]
	\qwbundle[alternate]{}&\gate{V}\qwbundle[alternate]{} & \meter{}\qwbundle[alternate]{} & \gate[cwires={1}]{f}
	\end{tikzcd}}
\intertext{and}
		\raisebox{4pt}{\begin{tikzcd}[column sep=2mm]
			\lstick{\ket{x}}&\gate{U}\qwbundle[alternate]{} & \qwbundle[alternate]{}
	\end{tikzcd}} =& \raisebox{4pt}{\begin{tikzcd}[column sep=2mm]
			\lstick{\ket{f(x)}}&\gate{V}\qwbundle[alternate]{} & \qwbundle[alternate]{} 
	\end{tikzcd}}.
\end{alignat}
\end{simplificationrule}
Rule~\ref{rule:rewrite} is useful, if $V$ is easier to implement than $U$. 
We are mostly interested in the following special case. 
The gate set ``Clifford + T''~\cite{Boykin2000}, $\{H, c_X, T\}$, is very common for fault-tolerant quantum computation. 
This gate set could also be called ``Incoherent Operations + H''. 
If we consider circuits which solely consist of $r_z$ (which includes $T$ as special cases), $c_x$ and $H$, then the Hadamard gates form obstacles for simplifications using Rule~\ref{rule:pptrimming}.
However, sometimes it is still possible to push phase polynomials past such an obstacle.
This is best visualized with the example of a controlled-Not gate.
\begin{example}
	For $U=c_x H^{\otimes 2}$ and $V=H^{\otimes 2}$ Rule~\ref{rule:rewrite} gives
	\begin{equation}
	\begin{tikzcd}[column sep=2mm]
		\qw&\ctrl{1} &\gate{H} & \meter{}\\
		\qw&\targ{} & \gate{H} & \meter{}
	\end{tikzcd} = \begin{tikzcd}[column sep=2mm]
	\qw&\gate{H} & \meter{} & \gate[2,cwires={1,2}]{f}\\
	\qw&\gate{H} & \meter{} & 
\end{tikzcd},
	\end{equation}
with $f(x_1,x_2)=x_1 \oplus x_2, x_2$.
\end{example}

It is well-known that controlled unitary gates before a measurement in the computational basis can be replaced by classically controlled gates. 
We include this rule here for completeness. 
\begin{simplificationrule}[Classically-controlled gates]\label{rule:classicallycontrolled}
	Let the general controlled-$U_i$ operation be defined as in Eq.~(\ref{eq:cUi}). If the control-qubits are measured in the computational basis directly after this gate, then it can be replaced by a classically-controlled gate: 
	\begin{equation}
		\begin{tikzcd}
			&\ctrlbundle{1} & \meter{}\qwbundle[alternate]{}\\
			&\gate{U_i}\qwbundle[alternate]{} & \qwbundle[alternate]{}
		\end{tikzcd} = \begin{tikzcd}
			& \meter{}\qwbundle[alternate]{} \\
			&\gate{U_i}\vcw{-1}\qwbundle[alternate]{} & \qwbundle[alternate]{} 
		\end{tikzcd}.
	\end{equation}
For a controlled gate directly after the preparation of a computational basis state on the control qubits the analogous equation
\begin{equation}
	\begin{tikzcd}
		\lstick{$\ket{x}$}&\ctrlbundle{1} & \qwbundle[alternate]{} \\
		\qwbundle[alternate]{}&\gate{U_i}\qwbundle[alternate]{} & \qwbundle[alternate]{}
	\end{tikzcd} = \begin{tikzcd}
\lstick{$\ket{x}$} & \qwbundle[alternate]{} &\qwbundle[alternate]{} \\
		&\gate{U_x}\qwbundle[alternate]{} &\qwbundle[alternate]{}
	\end{tikzcd}
\end{equation}
holds.
\end{simplificationrule}
The following example illustrates how Rule~\ref{rule:classicallycontrolled} can be used to simplify the quantum Fourier transform of a fixed input state.
\begin{example}
	The quantum Fourier transform (QFT)~\cite{Coppersmith2002} is a crucial component of many quantum algorithms which provide a speedup compared to their classical counter-parts. Sometimes it is convenient to write that the QFT is applied to $\ket{0}$, e.g. when followed by the inverse QFT later on. For $n=4$ qubits the circuit has the form
	\begin{widetext}
		\begin{equation}
			QFT\ket{0}^{\otimes 4} = \begin{tikzcd}[column sep=2mm]
				\lstick{\ket{0}} & \gate{H} & \gate{R_2} & \gate{R_3} & \gate{R_4} & \qw& \qw& \qw& \qw& \qw& \qw& \qw& \qw\\
				\lstick{\ket{0}} & \qw & \ctrl{-1} & \qw & \qw & \qw & \gate{H} & \gate{R_2} & \gate{R_3} & \qw& \qw& \qw& \qw\\
				\lstick{\ket{0}} & \qw & \qw & \ctrl{-2} &\qw & \qw &\qw & \ctrl{-1} &\qw & \gate{H} & \gate{R_2}& \qw& \qw\\
				\lstick{\ket{0}} & \qw & \qw & \qw & \ctrl{-3} & \qw &\qw & \qw & \ctrl{-2} & \qw & \ctrl{-1} & \gate{H}& \qw\\
			\end{tikzcd}.
		\end{equation}
	\end{widetext}
	One should note, however, that with Rule~\ref{rule:classicallycontrolled} all multi-qubit gates can be removed. So in this case $QFT\ket{0}=H^{\otimes 4} \ket{0}$.
\end{example}
\section{Non-unitary quantum operations}~\label{sec:nonunitary}
Up until now we shifted only unitary operations across the boundaries of the quantum circuit. The result was a deterministic pre- and post-processing.
We now extend the intuition that ``classical operations should be performed on the classical side of the boundary'' to non-unitary channels.

A general quantum operation on a density matrix $\rho$ can be expressed as
\begin{equation}
	\Phi(\rho) = \sum_n K_n \rho K_n^\dagger
\end{equation}
via its Kraus operators $K_n$ which fulfill
\begin{equation}
	\sum_n K_n^\dagger K_n = \1. 
\end{equation} 
If $K_n$ is incoherent for all $n$, then $\Phi$ is incoherent. If, additionally, the $K_n^\dagger$ are incoherent, then $\Phi$ is strictly incoherent (SI)~\cite{Yadin2016}.
While incoherent operations do not create coherence, SI operations additionally do not use any coherence of the input state.

Depending on the boundary type it suffices to require that the Kraus operators $K_n$ or their adjoints $K_n^\dagger$ are incoherent.
For example consider the $X$-measurement on a single qubit, which is not a SI operation.
It can be pushed to classical pre- but not post-processing.
However, for simplicity, we will consider SI operations only.  

SI operations can be interpreted as a classical stochastic process, which is described by a stochastic transition matrix $M$~\cite{Yadin2016}, with
\begin{equation}
	M_{ij} = \sum_n |\bra{i}K_n \ket{j}|^2.
\end{equation}
 The measurement probabilities transform like
\begin{equation}
	\begin{aligned}
	p_{i} \mapsto& \bra{i}\Phi(\rho)\ket{i}\\
	 =&\sum_j M_{ij} p_j.
	 \end{aligned}
\end{equation}

However, in the context of circuit optimization, we do not start from the Kraus operator representation.
It was shown in \cite{Yadin2016} that any SI operation can be written as an incoherent interaction with an environment.
The following simplification rule describes how any such SI operation on the boundary of the quantum circuit can be shifted to a classical, but non-deterministic, pre- or post-processing.

\begin{simplificationrule}[SI operations]\label{rule:incoherent}
For a circuit on $n$ system qubits and $n_E$ ancillary qubits,
let the controlled-$U_i$ gate be defined as in Eq.~(\ref{eq:cUi}) and let $F_\mu$ have phase polynomial representation $(p_\mu, f_\mu)$. 
The measurement outcomes are $x\in\{0,1,...,2^n-1\}$ and $\mu\in\{0,1,...,2^{n_E}-1\}$ on the system and the ancillas, respectively.
Then
\begin{equation}
	\begin{tikzcd}[column sep=2mm]
		\lstick{$\ket{0}$}&\gate{U_i}\qwbundle[alternate]{} & \meter{}\qwbundle[alternate]{}\vcw{1}&\\
		\qwbundle[alternate]{}&\ctrlbundle{-1} & \gate{F_\mu}\qwbundle[alternate]{} & \meter{}\qwbundle[alternate]{}
	\end{tikzcd}
=
	\raisebox{4pt}{\begin{tikzcd}[column sep=2mm]
%		&  & \boxed{\vphantom{U_i}\mu}\vcw{1}\\
		& \meter{}\qwbundle[alternate]{} & \gate[cwires={1}]{f}
	\end{tikzcd}},
\end{equation}
where the stochastic process $f$ is defined via the stochastic matrix
\begin{equation}
	M = \sum_{\mu,x} P(\mu | x) \ket{f_\mu(x)} \bra{x}
\end{equation}
with
\begin{equation}
	P(\mu|x) = |\bra{\mu}U_x\ket{0}|^2. \label{eq:Pmux}
\end{equation}
Analogously,
\begin{equation}
	\begin{tikzcd}[column sep=2mm]
		& \ket{0}&\gate{U_i}\qwbundle[alternate]{} & \meter{}\qwbundle[alternate]{}\vcw{1}&\\
		\lstick{$\ket{x}$}&\qwbundle[alternate]{}&\ctrlbundle{-1} & \gate{F_\mu}\qwbundle[alternate]{}&\qwbundle[alternate]{}
	\end{tikzcd}
	=
	\raisebox{4pt}{\begin{tikzcd}[column sep=2mm]
		 &\lstick{$\ket{f(x)}$}&[5mm] \qwbundle[alternate]{} 
	\end{tikzcd}}.
\end{equation}
\end{simplificationrule}
\begin{proof}
\begin{Schritte}
	\begin{tikzcd}[column sep=2mm]
		\lstick{$\ket{0}$}&\gate{U_i}\qwbundle[alternate]{} & \meter{}\qwbundle[alternate]{}\vcw{1}\slice{} &\\
		\qwbundle[alternate]{}&\ctrlbundle{-1} & \gate{F_\mu}\qwbundle[alternate]{} & \meter{}\qwbundle[alternate]{}
	\end{tikzcd}
\schritthier{\overset{Eq.~(\ref{eq:trimppright})}{=}}&
\begin{tikzcd}[column sep=2mm]
	\lstick{$\ket{0}$}&\gate{U_i}\qwbundle[alternate]{} & \meter{}\qwbundle[alternate]{}& \cwbend{1}\\
	\qwbundle[alternate]{}&\ctrlbundle{-1} & \meter{}\qwbundle[alternate]{} & \gate[cwires={1}]{f_\mu}
\end{tikzcd}\\
\schritttext{Apply Rule~\ref{rule:classicallycontrolled}.}
\schritthier{\hspace{3mm}=\hspace{3mm}}&
\begin{tikzcd}[column sep=2mm]
	\lstick{$\ket{0}$}&\gate{U_i}\qwbundle[alternate]{}\vcw{1} & \meter{}\qwbundle[alternate]{}\vcw{1}&\\
	\qwbundle[alternate]{}& \meter{}\qwbundle[alternate]{} & \gate[cwires={1}]{f_\mu}&
\end{tikzcd}\\
\schritttext{\begin{minipage}{3cm}\flushleft
		Randomly choose $\mu$ with probability $P(\mu|x)=|\bra{\mu}U_x\ket{0}|^2$
	\end{minipage}
}
\schritthier{\hspace{3mm}=\hspace{3mm}}&
\raisebox{4pt}{\begin{tikzcd}[column sep=2mm]
%	&  & \boxed{\vphantom{U_i}\mu}\vcw{1}\\
	& \meter{}\qwbundle[alternate]{} & \gate[cwires={1}]{f_{\mu}}
\end{tikzcd}}\\
\schritttext{\begin{minipage}{3cm}\flushleft
		Combine sampling of $\mu$ and application of $f_{\mu}$ into a single stochastic process $f$.
	\end{minipage}}
\schritthier{\hspace{3mm}=\hspace{3mm}}&
\raisebox{4pt}{\begin{tikzcd}[column sep=2mm]
		%	&  & \boxed{\vphantom{U_i}\mu}\vcw{1}\\
		& \meter{}\qwbundle[alternate]{} & \gate[cwires={1}]{f}
\end{tikzcd}}
\end{Schritte}
\end{proof}
Note that the complexity of calculating $P(\mu|x)$ in general scales exponentially in the number of involved qubits. 
Thus Rule~\ref{rule:incoherent} can only be applied to subsystems with limited size.
The following example illustrates the use of Rule~\ref{rule:incoherent} for an ancillary system of size $n_E=2$.
\begin{example}
Consider the circuit
\begin{equation}
	C = \begin{tikzcd}[column sep=2mm]
		\lstick{$\ket{0}$}&\qw & \targ{}& \qw & \meter{} &  \cwbend{2}\\
		\lstick{$\ket{0}$}&\gate{H}& \ctrl{-1} & \gate{r_x\left(\frac{\pi}{2}\right)} & \meter{}\vcw{1} \\
		\lstick[2]{$\ket{\psi_{\mathrm{in}}}$}&\qw&\ctrl{-1}&\qw & \gate{X} & \gate[2]{c_X} &  \meter{}\\
		 & \qw & \qw & \ctrl{-2} & \qw & \qw & \meter{}
	\end{tikzcd}
\end{equation}
with a SI operation directly before a measurement in the computational basis.
We apply Rule~\ref{rule:incoherent} to turn the incoherent operation into a classical post-processing. The circuit becomes
\begin{equation}
	C =\raisebox{4pt}{ \begin{tikzcd}[column sep=2mm]
		\lstick[2]{$\ket{\psi_{\mathrm{in}}}$}&\meter{}&\gate[2,cwires={1,2}]{f}\\
		& \meter{} &
	\end{tikzcd}}.
\end{equation}
The interaction with the ancillary system is replaced by the stochastic process $f$, which acts as a post-processing step.
We identify $F_\mu$ with the classically controlled operations and determine the corresponding basis transformation
\begin{equation}
	\begin{aligned}
		f_{0,0}(x_1,x_2) =& x_1,x_2,\\
		f_{0,1}(x_1,x_2) =& x_1\oplus 1, x_2,\\
		f_{1,0}(x_1,x_2) =& x_1, x_1\oplus x_2,\\
	\text{and }	f_{1,1}(x_1,x_2) =& x_1\oplus 1, x_1 \oplus x_2\oplus 1.\\
	\end{aligned}
\end{equation}
The measurement probabilities $P(\mu|x)$ on the ancillary system,
\begin{equation}
	P(\mu|x) = \left(
	\begin{array}{cccc}
		\frac{1}{2} & \frac{1}{2} & \frac{1}{2} & \frac{1}{4} \\
		\frac{1}{2} & 0 & \frac{1}{2} & \frac{1}{4} \\
		0 & 0 & 0 & \frac{1}{4} \\
		0 & \frac{1}{2} & 0 & \frac{1}{4} \\
	\end{array}
	\right)_{\mu x},
\end{equation}
are calculated using Eq.~(\ref{eq:Pmux}).
Then the stochastic matrix is
\begin{equation}
	M=\left(
	\begin{array}{cccc}
		\frac{1}{2} & 0 & \frac{1}{2} & 0 \\
		0 & \frac{1}{2} & 0 & \frac{1}{2} \\
		\frac{1}{2} & 0 & \frac{1}{2} & \frac{1}{4} \\
		0 & \frac{1}{2} & 0 & \frac{1}{4} \\
	\end{array}
	\right).
\end{equation}
For
\begin{equation}
	\ket{\psi_{\mathrm{in}}} = \alpha \ket{0} + \beta \ket{1} + \gamma \ket{2} + \delta \ket{3}
\end{equation}
the measurement probabilities are transformed into
\begin{equation}
	T \left(
	\begin{array}{c}
		|\alpha|^2\\
		|\beta|^2\\
		|\gamma|^2\\
		|\delta|^2
	\end{array}\right)
	=	
	\frac{1}{4}\left(\begin{array}{c}
		2 |\alpha| ^2+2|\gamma|^2\\
	2 |\beta|^2+2 |\delta|^2\\
	 2 |\alpha|^2+2|\gamma|^2+|\delta|^2\\
	2 |\beta|^2+|\delta|^2
	\end{array}\right),
\end{equation}
which can also be verified by direct calculation.
\end{example}
\section{Circuit cutting enables more simplifications}\label{sec:cutting}
Circuit cutting refers to methods which allow to split a larger quantum circuit into smaller ones~\cite{Bravyi2016,Peng2020,Mitarai2021,Piveteau2022}. 
The measurement outcomes are recombined to simulate the larger circuit. 
We distinguish two different kinds of cuts: time- and spacewise. 
In the usual circuit diagrams with time flowing from left to right and the qubits arranged from top to bottom, time- and spacewise cuts are vertical and horizontal, respectively.
The problem of finding good positions for cuts is challenging, but can be tackled with mixed integer programming~\cite{Tang2021}.
%The latter can be introduced by decomposing multi-qubit gates as a linear combination of local unitaries~\cite{Bravyi2016}.
The following lemma, taken from \cite{Mitarai2021}, describes a horizontal cut.
\begin{lemma}[Horizontal cut~\cite{Mitarai2021}]\label{lemma:horizontalcut}
	For operators $A_1$ and $A_2$ with $A_1^2=A_2^2=\1$ the unitary operation $\e^{i \theta A_1\otimes A_2}$
	can be expressed via the quasi-probability decomposition
	\begin{equation}
		\begin{tikzcd}[column sep=2mm]
			  & \gate[2]{\e^{i \theta A_1\otimes A_2}} &\qw \\
			  & \qw & \qw 
		\end{tikzcd}
	= \sum_{k=1}^{10} c_i \left(
	\begin{tikzcd}[column sep=2mm]
		& \gate{P_k}& \qw\\
		& \gate{Q_k} & \qw
	\end{tikzcd}\right)
	\end{equation}
with
\begin{equation}
\begin{aligned}
	c_1=&\cos^2 \theta\\
	c_2=&\sin^2 \theta\\
	c_k=& \cos\theta \sin\theta\quad \forall k\in \{3,6,7,10\}\\
	c_k=& -\cos\theta \sin\theta\quad \forall k\in \{4,5,8,9\}
\end{aligned}
\end{equation}
and
\begin{equation}
\begin{aligned}
	P_1 =& \1, & Q_1 =& \1\\
	P_2 =& A_1,& Q_2 =& A_2\\
	P_3 =& \frac{1}{2}\left(\1+A_1\right), & Q_3 =& \frac{1}{\sqrt{2}}\left(\1+i A_2\right)\\ 
	P_4 =& \frac{1}{2}\left(\1-A_1\right), & Q_4 =& \frac{1}{\sqrt{2}}\left(\1+i A_2\right)\\
	P_5 =& \frac{1}{2}\left(\1+A_1\right), & Q_5 =& \frac{1}{\sqrt{2}}\left(\1-i A_2\right)\\
	P_6 =& \frac{1}{2}\left(\1-A_1\right), & Q_6 =& \frac{1}{\sqrt{2}}\left(\1-i A_2\right)\\
	P_7 =& \frac{1}{\sqrt{2}}\left(\1+i A_1\right), & Q_7 =& \frac{1}{2}\left(\1+ A_2\right)\\
	P_8 =& \frac{1}{\sqrt{2}}\left(\1-i A_1\right), & Q_8 =& \frac{1}{2}\left(\1+ A_2\right)\\
	P_9 =& \frac{1}{\sqrt{2}}\left(\1+i A_1\right), & Q_9 =& \frac{1}{2}\left(\1- A_2\right)\\
	P_{10} =& \frac{1}{\sqrt{2}}\left(\1-i A_1\right), &\quad Q_{10} =& \frac{1}{2}\left(\1- A_2\right).
\end{aligned}
\end{equation}
The factors $\frac{1}{2}(\1\pm A_{1/2})$ and $\frac{1}{\sqrt{2}}(\1\pm i A_{1/2})$ correspond to projective measurements and rotations, respectively. 
\end{lemma}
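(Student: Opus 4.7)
The plan is to interpret the identity as one about quantum channels: setting $U := \e^{i\theta A_1 \otimes A_2}$, I want to establish
\[
U \rho U^\dagger = \sum_{k=1}^{10} c_k\,(P_k \otimes Q_k)\, \rho\, (P_k \otimes Q_k)^\dagger
\]
for every state $\rho$. The crucial simplification is that, because $A_1^2 = A_2^2 = \1$, also $(A_1 \otimes A_2)^2 = \1$, so the exponential series truncates and
\[
U = \cos\theta\, \1 \otimes \1 + i \sin\theta\, A_1 \otimes A_2 .
\]
Expanding $U \rho U^\dagger$ with this expression produces only four terms: $\cos^2\theta\,\rho$, $\sin^2\theta\,(A_1 \otimes A_2)\rho (A_1 \otimes A_2)$, and the cross term $i\cos\theta\sin\theta\,[A_1 \otimes A_2,\rho]$.

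The first two terms are reproduced immediately by $k=1,2$ of the right-hand side: $P_1 \otimes Q_1 = \1$ and $P_2 \otimes Q_2 = A_1 \otimes A_2$, with coefficients $c_1 = \cos^2\theta$ and $c_2 = \sin^2\theta$. What remains is to show that the eight terms indexed by $k \in \{3,\dots,10\}$ reproduce $i\cos\theta \sin\theta\,[A_1 \otimes A_2, \rho]$. I would split them into the two symmetric groups suggested by their structure: $k = 3,\dots,6$, where the first factor is a projector $P_s^{(1)} = \tfrac{1}{2}(\1 + s A_1)$ and the second a rotation $R_t^{(2)} = \tfrac{1}{\sqrt{2}}(\1 + i t A_2)$ with $s,t \in \{\pm 1\}$; and $k = 7,\dots,10$, where the roles of the two systems are swapped.

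The algebra for each group collapses after using two small identities, both immediate from $A_j^2 = \1$:
\[
\sum_{s=\pm} s\, P_s X P_s = \tfrac{1}{2}\{A, X\}, \qquad R_+ X R_- - R_- X R_+ = i[A, X].
\]
The signs $c_k$ listed in the lemma follow exactly the pattern $c_k \propto s \cdot t$ needed to assemble these combinations, so the $k=3,\dots,6$ sum reduces to $\tfrac{i}{2}\cos\theta\sin\theta\,\{A_1 \otimes \1,\, [\1 \otimes A_2, \rho]\}$, and the $k=7,\dots,10$ sum, by the same argument with systems swapped, to $\tfrac{i}{2}\cos\theta\sin\theta\,\{\1 \otimes A_2,\, [A_1 \otimes \1, \rho]\}$.

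The main obstacle is then the final bookkeeping: one must check that these two mixed commutator-anticommutator expressions add up to $i\cos\theta\sin\theta\,[A_1 \otimes A_2, \rho]$ cleanly. Expanding both, and using that $A_1 \otimes \1$ commutes with $\1 \otimes A_2$ so that their product is $A_1 \otimes A_2$, each group produces the desired commutator term $\tfrac{1}{2}[A_1 \otimes A_2, \rho]$ together with an unwanted pair of the form $\pm\tfrac{1}{2}\bigl((A_1 \otimes \1)\rho(\1 \otimes A_2) - (\1 \otimes A_2)\rho(A_1 \otimes \1)\bigr)$. These unwanted terms appear with opposite signs in the two groups and therefore cancel, leaving exactly the required cross term and completing the proof.
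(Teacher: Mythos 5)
Your verification is correct, but there is no internal proof to compare it against: the paper imports Lemma~\ref{lemma:horizontalcut} verbatim from Mitarai and Fujii~\cite{Mitarai2021} and never proves it, so your argument is a genuine addition rather than a parallel route. What you do is essentially reconstruct the derivation of the cited reference: use $(A_1\otimes A_2)^2=\1$ to truncate the exponential to $U=\cos\theta\,\1\otimes\1+i\sin\theta\,A_1\otimes A_2$, match $k=1,2$ to the two diagonal terms of $U\rho U^\dagger$, and show that the eight remaining terms assemble into the cross term $i\cos\theta\sin\theta\,[A_1\otimes A_2,\rho]$. I checked the algebra: with $P_s=\tfrac12(\1+sA_1)$ and $R_t=\tfrac{1}{\sqrt2}(\1+itA_2)$ the sign pattern $c_k\propto st$ gives, for $k=3,\dots,6$, the quantity $\tfrac{i}{2}\cos\theta\sin\theta\,\{A_1\otimes\1,[\1\otimes A_2,\rho]\}$, and for $k=7,\dots,10$ the same with the factors swapped; writing $A=A_1\otimes\1$, $B=\1\otimes A_2$ one has $\{A,[B,\rho]\}+\{B,[A,\rho]\}=2[AB,\rho]$ because the $A\rho B$ and $B\rho A$ terms cancel between the two groups and $AB=BA=A_1\otimes A_2$, so the bookkeeping closes exactly as you claim. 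Two points deserve an explicit sentence in a final write-up: (i) you implicitly use Hermiticity of $A_1$ and $A_2$ (so that $P_s^\dagger=P_s$ and $R_t^\dagger=R_{-t}$), which is implicit in the lemma's premise that $\e^{i\theta A_1\otimes A_2}$ is unitary for all $\theta$ but is not stated; (ii) the diagrammatic equality must be read at the level of channels, i.e.\ precisely your displayed equation $U\rho U^\dagger=\sum_k c_k (P_k\otimes Q_k)\rho(P_k\otimes Q_k)^\dagger$, which is also how the paper uses the lemma in the expectation-value formula immediately following it. The paper's citation keeps the focus on its own contribution; your proof makes the statement self-contained and, usefully, confirms the transcription of the coefficients $c_k$, whose sign pattern is the easiest thing to get wrong when copying such a decomposition.
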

Lemma~\ref{lemma:horizontalcut} implies that a two-qubit gate can be simulated by six circuits where the gate is replaced by local operations, i.e. we can ``cut'' the gate horizontally.
The expectation value of any observable $\mathcal{A}$ on the state produced by the original circuit on the initial state $\rho_0$, can be obtained from the quasi-probability decomposition as
\begin{equation}
	\begin{aligned}
	\langle\mathcal{A}\rangle =& \mathrm{tr}\left(\mathcal{A} \e^{i \theta A_1\otimes A_2}\rho_0\e^{-i \theta A_1\otimes A_2}\right)\\
	=& \sum_{k=1}^{10} c_k \mathrm{tr}\left(\mathcal{A} (P_k\otimes Q_k)\rho_0(P_k^\dagger\otimes Q_k^\dagger)\right)
	\end{aligned}
\end{equation}
due to the linearity of the trace.

We now consider timewise cuts. 
They introduce two new boundaries on which we might apply the simplification rules discussed above. 
\begin{lemma}[Vertical cut~\cite{Peng2020}]\label{lemma:verticalcut}
The identity operation on a single qubit can be decomposed as
\begin{equation}
	\raisebox{4pt}{\begin{tikzcd}[column sep=2mm]
			&[1cm] \qw & 
	\end{tikzcd}}
	=
	\sum_{k=1}^8 d_k \left(
	\raisebox{6pt}{\begin{tikzcd}[column sep=2mm]
		&[0.25cm] \meter{$O_k$} &  \rho_k &[0.25cm] \qw 
	\end{tikzcd}}\right),
\end{equation}
where
\begin{alignat}{2}
O_k &= \sigma_{\lfloor (k-1)/2\rfloor},\\
\rho_1 &= \proj{0}\nonumber\\
\rho_2 &= \proj{1}\\
\rho_k &= \frac{1}{2}\left(\1 - (-1)^k O_k\right) \quad \forall k>2,\nonumber\\
\intertext{and} d_k &= \left\{\begin{array}{cl}
	-\frac{1}{2} & \text{if } k\in \{4,6,8\}\\
	\frac{1}{2} & \text{else.}
\end{array}\right.
\end{alignat}
The symbol $\sigma_k\in \{\1,X,Y,Z\}$ denotes the identity and the Pauli matrices.
\end{lemma}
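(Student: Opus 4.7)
The plan is to verify the decomposition at the level of super-operators on a single qubit: interpret the sum on the right-hand side as a linear combination of measure-and-prepare channels, evaluate it on an arbitrary density matrix $\rho$, and check that the result is $\rho$. The single algebraic ingredient is the Pauli (Bloch) expansion
\begin{equation*}
\rho = \tfrac{1}{2}\bigl(\mathrm{tr}(\rho)\,\1 + \mathrm{tr}(X\rho)\,X + \mathrm{tr}(Y\rho)\,Y + \mathrm{tr}(Z\rho)\,Z\bigr),
\end{equation*}
which expresses every qubit density matrix in the orthonormal Pauli basis of the operator space.

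First I would fix the meaning of each term. In the quasi-probability framework one reads ``measure $O_k$, then prepare $\rho_k$'' so that the $\pm 1$ eigenvalue returned by the Pauli measurement is absorbed as a signed weight multiplying the prepared state; concretely, the $k$-th summand implements the super-operator
\begin{equation*}
\Phi_k(\rho) = \mathrm{tr}(O_k\,\rho)\,\rho_k .
\end{equation*}
For $O_k=\1$ (entries $k=1,2$) this collapses to $\rho\mapsto\rho_k$ since $\mathrm{tr}(\rho)=1$, matching the informal notion of a trivial ``identity measurement.''

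Next I would substitute the tabulated $O_k$, $\rho_k$, $d_k$ and regroup the eight terms into four pairs indexed by the measured observable. The pair $k\in\{1,2\}$ yields $\tfrac{1}{2}(\proj{0}+\proj{1})=\tfrac{1}{2}\1$, producing the $\mathrm{tr}(\rho)\,\1$ contribution. Each of the remaining pairs $\{3,4\}$, $\{5,6\}$, $\{7,8\}$ carries opposite signs $\pm\tfrac{1}{2}$ in $d_k$ and states of the form $\tfrac{1}{2}(\1\pm P)$ with $P\in\{X,Y,Z\}$; using the elementary identity
\begin{equation*}
\tfrac{1}{2}(\1+P) - \tfrac{1}{2}(\1-P) = P,
\end{equation*}
each such pair contributes $\tfrac{1}{2}\,\mathrm{tr}(P\rho)\,P$. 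Summing the four contributions reproduces exactly the Bloch expansion above and hence $\rho$, establishing the claim.

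The main obstacle is interpretational rather than computational: one has to commit to the convention that identifies ``measure $O_k$, prepare $\rho_k$'' with the super-operator $\Phi_k(\rho)=\mathrm{tr}(O_k\rho)\,\rho_k$, i.e.\ the Pauli eigenvalue is merged with the sign of $d_k$. Once this is granted, the lemma is a repackaging of single-qubit Pauli tomography in measure-and-prepare form. As a sanity check I would also verify $\sum_k d_k = 5\cdot\tfrac{1}{2}-3\cdot\tfrac{1}{2} = 1$, consistent with the identity channel being trace-preserving.
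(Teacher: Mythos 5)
The paper does not actually prove this lemma: it is imported verbatim from Peng et al.\ \cite{Peng2020} and stated without proof, so there is no in-paper argument to compare against. Your verification is correct and is essentially the standard one. The interpretational convention you commit to --- reading the diagram ``measure $O_k$, prepare $\rho_k$'' as the super-operator $\Phi_k(\rho)=\mathrm{tr}(O_k\rho)\,\rho_k$, with the $\pm1$ measurement eigenvalue absorbed into the classical post-processing weight --- is exactly the convention of \cite{Peng2020} and of this paper's use of the lemma (compare how Lemma~\ref{lemma:horizontalcut} is applied to expectation values via linearity of the trace). Your pairing of the eight terms, the identity $\tfrac{1}{2}(\1+P)-\tfrac{1}{2}(\1-P)=P$, and the reassembly into the Bloch expansion are all correct, as is the sign bookkeeping ($d_4,d_6,d_8=-\tfrac12$ paired with the $\tfrac12(\1-P)$ preparations).

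One point worth making explicit, since it is what the circuit-cutting application actually requires: the identity must hold as a linear-map identity, so that it remains valid when the cut wire is entangled with the rest of the circuit. Your argument delivers this, because $\rho\mapsto\tfrac12\sum_{\sigma}\mathrm{tr}(\sigma\rho)\,\sigma$ is valid for arbitrary $2\times2$ operators (not only density matrices), and both sides of the decomposition are linear; density matrices span the operator space, so equality on them implies equality of the super-operators and hence on subsystems of entangled states. Also note that your final sanity check $\sum_k d_k=1$ is only a heuristic consistency statement, not the trace-preservation condition itself (the actual condition is $\sum_k d_k\,\mathrm{tr}(O_k\rho)\,\mathrm{tr}(\rho_k)=\mathrm{tr}(\rho)$, which holds because the Pauli pairs cancel in trace); this does not affect the proof, which is complete without it.
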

By cutting (few) wires of the quantum circuit we can simulate a larger quantum circuit with smaller ones. 
The sum in Lemma~\ref{lemma:verticalcut} dictates how the measurement statistics of the original circuit is obtained from the measurements on the smaller circuits.
Note that the states which need to be prepared, given by the density matrices $\rho_i$, are pure states. 
In practice, Lemma~\ref{lemma:verticalcut} suffers from the caveats of noisy state tomography with finite sample size, e.g. negative probabilities in the state. 
We neglect those issues here, but refer the reader to the extensive literature, e.g.~\cite{Christandl2012, Schwemmer2015, Perlin2021}.

The following example illustrates how circuit cutting can allow to apply the described hybrid simplification rules to more locations in a quantum circuit.
\begin{example}[Rule~\ref{rule:incoherent} after a cut]
	We consider circuits which are almost in the form of Rule~\ref{rule:incoherent}, where circuit cutting allows us to apply the rule.
	\begin{enumerate}
		\item[(a)] In the left hand side of
\begin{equation}
	\begin{aligned}
	&\begin{tikzcd}[column sep=2mm]
		\lstick{...}  & \phase[label position=below]{\text{\raisebox{-5pt}{\ScissorRight}}\hspace{0.25cm}}& \qw & \qw \rstick{...}\\[-0.5cm]
		\lstick{$\ket{0}$}& \targ{}\vqw{-1} & \gate{U_i} &\meter{}\vcw{1}&\\
		\lstick{...}&\qwbundle[alternate]{}&\ctrlbundle{-1} & \gate{F_\mu}\qwbundle[alternate]{} & \meter{}\qwbundle[alternate]{}
	\end{tikzcd}\\
	=&\sum_k c_k \left\{
	\begin{tikzcd}[column sep=2mm]
		\lstick{...}  & \gate{P_k}& \qw & \qw \rstick{...}\\[-0.5cm]
		\lstick{$\ket{0}$}& \gate{Q_k} & \gate{U_i} &\meter{}\vcw{1}&\\
		\lstick{...}&\qwbundle[alternate]{}&\ctrlbundle{-1} & \gate{F_\mu}\qwbundle[alternate]{} & \meter{}\qwbundle[alternate]{}
	\end{tikzcd}\right.
\end{aligned}
\end{equation}
		the ancillary qubit interacts with another qubit via a controlled-not gate. 
		This prevents the direct application of Rule~\ref{rule:incoherent} to remove the ancillary qubit.
		A horizontal cut on the controlled-not gate replaces it with local operations.
		Each term of the quasi-probability decomposition now allows to apply Rule~\ref{rule:incoherent}. 
		The unitary operations on the ancillary qubit can be absorbed into the definition of $U_i$. 
		The projective measurement has no effect, because the initial state $\ket{0}$ of the ancillary qubit is already a computational basis state.
		\item[(b)] 
		In the left hand side of
		\begin{equation}
			\begin{aligned}
			&\begin{tikzcd}[column sep=2mm]
			&\gate[2]{V}&\push{\text{\ScissorRight}}&\gate{U_i} & \meter{}\vcw{1}&\\
			&\qwbundle[alternate]{}&\qwbundle[alternate]{}&\ctrlbundle{-1} & \gate{F_\mu}\qwbundle[alternate]{} & \meter{}\qwbundle[alternate]{}
			\end{tikzcd}\\
		=&\sum_k d_k\left\{ \begin{tikzcd}[column sep=2mm]
			&\gate[2]{V}&\meter{$O_k$} &  \rho_k &\qw&\gate{U_i} & \meter{}\vcw{1}&\\
			&\qwbundle[alternate]{}&\qwbundle[alternate]{}&\qwbundle[alternate]{}&\qwbundle[alternate]{}&\ctrlbundle{-1} & \gate{F_\mu}\qwbundle[alternate]{} & \meter{}\qwbundle[alternate]{}
		\end{tikzcd}\right.
			\end{aligned}
		\end{equation}
		the ancillary qubit interacts coherently with the system first. 
		This prevents us from computing the measurement statistics on the ancillary system via Eq.~(\ref{eq:Pmux}).
		A vertical cut at the indicated location separates the coherent from the incoherent interaction and allows us to apply Rule~\ref{rule:incoherent}. 
	\end{enumerate}
\end{example}
\section{Conclusion}\label{sec:conclusion}
In the present paper we formalized the intuition that incoherent operations can be pushed across the boundary between a quantum circuit and classical pre- or post-processing.
As a result we collected four simple rules which can be used to simplify quantum circuits.
Those rules are particularly useful when the quantum circuit runs on NISQ hardware.
When implemented in a compiler they can automatically simplify circuits and contemplate other simplification rules which work only on the quantum part of the calculation.
We illustrated how circuit cutting can extend the usefulness of the rules.
Automatic identification of beneficial circuit cuts should improve the results of compilers for quantum circuits. 
Hybrid simplification rules should be particularly useful in the context of hybrid algorithms, like variational quantum algorithms, which already contain classical blocks of computation.
Further studies are required to investigate how much the discussed rules can affect the performance of NISQ devices for such applications.
\begin{acknowledgements}
	This work was funded by the German Federal Ministry for Economic Affairs and Climate Action (BMWK) via the project AQUAS. %and the QuantERA project EQUIP.
\end{acknowledgements}
\bibliographystyle{unsrtnat}
%\bibliography{circuits} 

\onecolumngrid
\clearpage
\appendix
\end{document}